\newcommand{\reg}{\textrm{reg}}
\newcommand{\Ci}{\mathcal{C}^{\infty}}
\newcommand{\loc}{\mathrm{loc}}
\newcommand{\alte}{\mathrm{alt}}
\newcommand{\ph}{\varphi}
\newcommand{\supp}{\mbox{supp}}
\newcommand{\WF}{\mbox{\rm WF}}
\newcommand{\1}{\mathds{1}}
\newcommand{\tio}{\cdot_{\calT}}
\newcommand{\calA}{{\mathcal A}}
\newcommand{\al}{\alpha}
\newcommand{\Co}{{\mathbb C}}
\newcommand{\Na}{{\mathbb N}}
\newcommand{\R}{{\mathbb R}}
\newcommand{\M}{{\mathbb M}}
\newcommand{\calT}{{\mathcal T}}
\newcommand{\calJ}{{\mathcal J}}
\newcommand{\calD}{{\mathcal D}}
\newcommand{\calE}{{\mathcal E}} 
\newcommand{\calF}{{\mathcal F}}
\newcommand{\frakA}{{\mathfrak{A}}}
\newcommand{\euB}{{\mathscr B}}
\newcommand{\euC}{{\mathscr C}}
\newcommand{\sst}[1]{\scriptscriptstyle{#1}}  
\newcommand{\starH}{\star_{\!\sst{\Delta_1}}}    
\newcommand{\minus}{\sst{-1}}   
\newcounter{saveenum} 
\begin{document}
\markboth{Katarzyna Rejzner}
{fermionic fields in the functional approach to classical field theory}
%
%
%

\title{Fermionic fields in the functional approach\\ to classical field theory}

\author{Katarzyna Rejzner}

\address{ II. Inst. f. Theoretische Physik, Universit\"at Hamburg,\\ Luruper Chaussee 149,\\
D-22761 Hamburg, Germany\\ 
\email{katarzyna.rejzner@desy.de} }

\maketitle


\begin{abstract}
In this paper, we present a formulation of the classical theory of fermionic (anticommuting) fields, which fits into the general framework proposed by R.~Brunetti, M.~D\"utsch and K.~Fredenhagen. It was inspired by the recent developments in perturbative algebraic quantum field theory and allows for a deeper structural understanding also on the classical level. We propose a modification of this formalism that allows to treat also fermionic fields. In contrast to other formulations of classical theory of anticommuting variables, we don't introduce additional Grassman degrees of freedom. Instead the anticommutativity is introduced in a natural way on the level of functionals. Moreover our construction incorporates the functional-analytic and topological aspects, which is usually neglected in the treatments of anticommuting fields. We also give an example of an interacting model where our framework can be applied.  
\end{abstract}

\keywords{perturbative algebraic field theory, fermionic fields, classical field theory}

\ccode{Mathematics Subject Classification 2010: 81T05, 81T08, 70S05}

\section*{Introduction}
Recent developments in perturbative algebraic quantum field theory \cite{BDF,Duetsch:2000nh,DF,DF02,DF04,BreDue} have opened a new perspective also for the classical field theory. 
Using the algebraic approach one can treat the quantum algebra as a deformation of the classical structure. This view point turned out to be very successful for example in understanding the various notions of renormalization group \cite{BDF,Kai} and for applications in cosmology \cite{TCN,Thomas,Dappiaggi:2010pu}. In the second case Dirac fields play a very important role. In \cite{TCN} a suitable modification of the functional approach was developed to describe the anticommuting (fermionic) free quantum fields. The first steps to formulation of the classical theory were made as well.

The complete treatment of classical field theory of bosons shall be presented in \cite{Pedro}. Since the generalization to fermionic fields leads to some new effects, it is treated separately. In this paper we present a formalism which bases on the one of \cite{BDF} but introduces some new features characteristic to anticommuting fields. The problem of understanding the classical theory of fermions is a long standing one. Various attempts were made to tackle it. On the mathematical side there is the variational bicomplex approach \cite{Sard,Sard2,SardGia} and the supermanifold \cite{FranPol} or graded manifold \cite{CarFig97,MonVal02}  formalism. The supermanifold approach was used in \cite{Bruz-Cian-84,Bruz-Cian-86,Bruz85,Jad}. The geometrical foundations of supermechanics on graded manifolds were formulated in \cite{IboMar93,Mon92,MonVal02}, including the notion of graded Lagrangian, tangent supermanifold, space of velocities and Hamiltonian mechanics of a graded system.

The more intuitive but formal approach to the calculus of variation is presented in \cite{Witt}. It applies as well to even as to odd fields. It seems to be closer to intuition, but doesn't provide a clear mathematical structure that can be used to understand the classical theory of fermion fields. Following the spirit of \cite{BDF,Duetsch:2000nh,DF,DF02,DF04,BreDue} we propose a way to make the formal notions used in \cite{Witt} more precise, without loosing the intuition known from practical calculations in classical and quantum field theory. In contrast to most of the standard approaches we don't use the Grassman valued functions. In this respect our treatment resembles the one of \cite{Schmitt:1995fg,Schmitt:1996hs,Schmitt:1997iy}.
 The field configurations are ordinary sections of some vector bundle and the anticommutativity is introduced at the level of functionals. This way we avoid many of the technical and conceptual problems. 

The most difficult problem in understanding the classical theory of fermionic fields is the treatment of models with interaction, such as the Gross-Neveu \cite{Gross} or the Thirring model \cite{Thirring:1958in}. We propose a formalism to deal with this kind of theories in a way which agrees with the spirit of pAQFT. The classical structure presented here can be treated as a first step to quantization. In section \ref{quant} we comment further on this point and sketch the way how a deformation quantization can be performed, along the lines of \cite{BF0,BDF}. The approach presented in this paper allows not only the treatment of fermionic fields in the functional approach, but also opens a way to a locally covariant treatment of interacting models. The treatment of the free Dirac field was already presented by \cite{Ko}.

The paper is organized as follows: (1) In the first section we present an overview of all mathematical structures that will be needed for the formulation of the functional framework. We also define the kinematical structure of the theory. Since we work in the off-shell formalism, the notion of observables shall be introduced at this level without the need to solve the equations of motion. It will turn out that this is a very crucial point, especially for 
the fermionic fields.
(2) In the second section we propose a treatment of the dynamics. We construct the Poisson
structure (Peierls bracket) and intertwining maps between different Poisson structures (so called M{\o}ller maps). For the bosonic case, the existence of these maps for nonlinear actions will be proved in \cite{Pedro}.  (3) In the next section we treat the example of the Gross-Neveu model, using the functional methods. (4) Finally we give a sketch of quantization using the deformation of the Poisson structure.
\section{Algebra of functionals}
In the functional approach to classical field theory \cite{Duetsch:2000nh,BDF,DF,DF02,DF04} the basic structure is a Poisson algebra of classical observables. It is defined in a slightly abstract way as an algebra of functionals on the configuration space $E$. One starts with an off-shell setting, i.e. field configurations are not required to be solutions of some dynamical equations. They are just vector valued functions of a given type on the  Minkowski spacetime. For example in case of a scalar field these are smooth functions $E=\Ci(\M)$. We can identify classical observables with functionals on this space. Intuitively speaking, a measurement performed at a spacetime point $x$ can be associated with an evaluation functional $\Phi_x:E\rightarrow \R$, $\Phi_x(\varphi)=\varphi(x)$. Of course the full space of functionals is far too big to introduce a sensible structure on it. We have to impose some regularity conditions. First we restrict ourselves to functionals that are smooth. This notion again requires a few comments, since the calculus on infinite dimensional spaces (and $E$ is indeed infinite dimensional) is a subtle issue. Roughly speaking one endows the space $E$ with its natural locally convex topology and defines the derivative of a functional on $E$ as a generalization of the directional derivative (see \cite{Ham,Neeb} for details). We don't want to go into the details now, since in  the context of fermionic fields we will need a slightly different notion of smoothness anyway. We provide a thorough mathematical discussion of these points in section \ref{asf}. For now let us denote by $\Ci(E,\R)$ the space of smooth functionals on $E$.
Using the definition of the Peierls bracket \cite{Pei}  one  can  introduce the dynamical Poisson structure on $E$ in the Lagrange formalism. For the moment let us focus on the example of a free scalar field. The Peierls bracket of two functionals $F,G$ from a suitably chosen domain is defined as:
\begin{equation}\label{Peiscalar}
\{F,G\}:=-\langle F^{(1)},\Delta*G^{(1)}\rangle\,,
\end{equation}
where $\Delta$ is the causal propagator for the Klein-Gordon operator. We already indicated that this expression is well defined only on a suitably chosen domain. The reason for it is a singular character of the causal propagator $\Delta$. Indeed, the above expression contains implicitly a pointwise multiplication of distributions. To make sense of such an operation one has to control the singularity structure of the objects involved. The whole space $\Ci(E,\R)$ is too big to make a Poisson algebra out of it, since some functionals are simply too singular for the Peierls bracket (\ref{Peiscalar}) to be well defined. An easy way out is to consider only functionals that are compactly supported and local. The definition of the spacetime support of a smooth functional is simply a generalization of the distributional support. 
\begin{align}\label{support}
\supp\, F\doteq\{ & x\in M|\forall \text{ neighbourhoods }U\text{ of }x\ \exists \ph,\psi\in E, \supp\,\psi\subset U 
\\ & \text{ such that }F(\ph+\psi)\not= F(\ph)\}\ .\nonumber
\end{align}
The notion of locality is also quite intuitive. According to the standard definition one calls a functional $F$ local if it can be expressed as:
\begin{equation}\label{local}
F(\ph)=\int\limits_{\M} dx f(j_x(\ph))\,,
\end{equation}
where $f$ is a function on the jet space over $\M$ and $j_x(\ph)=(x,\ph(x),\partial\ph(x),\dots)$ is the jet of $\ph$ at the point $x$. It was already recognized in \cite{DF04,BDF} in the context of perturbative algebraic quantum field theory that the property of locality can be reformulated using the notion of \textit{additivity}\index{additivity of functionals} of a functional together with a certain wavefront set condition (we come back to it in section \ref{asf}). It can be shown that the Peierls bracket is well defined on the space of smooth compactly supported local functionals, but it turns out that this space is not closed under  $\{.,.\}$. To obtain a Poisson algebra one has to admit objects that are more singular. Here the microlocal analysis comes into play. Note that the wavefront set of the causal propagator is characterized as
\begin{equation}
\WF(\Delta)=\{(x,y,k,-k)\in \dot{T}^*(\M)^2|(x-y)^2=0,\ k||(x-y),\ k^2=0\}\,.
\end{equation}
By applying the H\"ormander's criterium \cite{Hoer} for multiplication of distributions one can identify a class of smooth compactly supported functionals for which (\ref{Peiscalar}) makes sense.

The construction outlined here is discussed in details in references \cite{Duetsch:2000nh,BDF,DF,DF02,DF04,BF0,BFK95}.
Here we only indicated most important features, which we will now reproduce for the case of fermionic fields. Note that the algebraic formulation allows us to work on a very abstract level and we can avoid some of the conceptual difficulties of other approaches. In the first step we have to specify our algebra of functionals. Here we encounter a first difference with respect to the bosonic case, since our functionals have to be antisymmetric in some sense. To obtain a suitable framework we have to use tools from functional analysis. Although we will need some abstract mathematics at the beginning, it turns out that once the framework is established, it can be easily applied to physical examples (section \ref{Dyn}). 

\subsection{Antisymmetric functionals}\label{asf}
%
We already indicated in the introduction to this section, that first we need to define the basic kinematical objects of the theory. Working off-shell means that at this point we don't specify the dynamics. Let $E\doteq \mathcal{E}(\M,V)$ be an infinite dimensional vector space of field configurations. Here $V$ is a $k$-dimensional complex vector space in which fields take values and $\M$ is the Minkowski spacetime with the signature $(+,-,-,-)$. Now we want to implement the notion of \textit{antisymmetry} into our kinematical structure. To this end we construct $\bigwedge\limits^\bullet E$, the exterior algebra of $E$, by taking the quotient of the tensor algebra over $E$ by an ideal $\left<x\otimes y+y\otimes x\right>$. The resulting algebra is equipped with the antisymmetric $\wedge$-product and can be written as a direct sum: $\bigwedge\limits^\bullet E=\bigoplus\limits_{p=0}^\infty\bigwedge^pE$, where $\bigwedge\limits^pE$ is a subspace of $\bigwedge\limits^\bullet E$ spanned by elements of the form: $u_1\wedge\ldots\wedge u_p$, $u_1,\ldots,u_p\in E$ and $\bigwedge\limits^0E=\Co$. 
A general element of $\bigwedge\limits^\bullet E$ is a finite sum of elements of spaces $\bigwedge\limits^pE$  (called homogenous).
Each space  $\bigwedge\limits^pE$ can be embedded in the space of antisymmetric sections \cite{TCN,Thomas}. The condition of ``asymmetry'' means that:
$u_{a_1,...,a_k,a_{k+1},...,a_p}(x_1,..., x_k,x_{k+1},..., x_p)=-u_{a_1,...,a_{k+1},a_k,...,a_p}(x_1,..., x_{k+1},x_k,..., x_p)$. We shall denote the space of antisymmetric sections by $\calE(\M^p,V^{\otimes p})^a$. This space can be equipped with the Fr\'echet topology of uniform convergence on compact subsets of $\M^p$. This induces also a topology on $\bigwedge\limits^pE$ and it follows that $\overline{\bigwedge\nolimits^pE}=\calE(\M^p,V^{\otimes p})^a$, with respect to this topology. We define now:
\begin{equation}
\mathscr{C}(\M,V)\doteq\bigoplus\limits_{p=0}^\infty\calE(\M^p,V^{\otimes p})^a\,.
\end{equation}
The dual of $\mathscr{C}(\M,V)$ is the direct product:
\begin{equation}
\calA\doteq\left(\mathscr{C}(\M,V)\right)'=\prod\limits_{p=0}^\infty\calE'(\M^p,V^{\otimes p})^a\doteq\prod\limits_{p=0}^\infty \calA^p\,.
\end{equation}
The elements of $\calA$ are called here the \textit{antisymmetric functionals}, written as (possibly infinite) sequences: $T=(T_p)_{p\in\Na}$, where the components $T_p\in \calA^p=\calE'(\M^p,V^{\otimes p})^a$ are referred to as homogenous functionals. The evaluation of $T\in \calA$ on an element $\mathscr{C}(\M,V)\ni u=\bigoplus\limits_{p=0}^n u^{(p)}$, $u^{(p)}\in\calE(\M^p,V^{\otimes p})^a$ is understood as:
\begin{equation}
T(u)=\sum_{p=0}^n \langle T_p,u^{(p)}\rangle\equiv \sum_{p=0}^n T_p(u^{(p)})\,,
\end{equation}
where $\langle.,.\rangle$ denotes the natural duality between $\calE(\M^p,V^{\otimes p})^a$ and $\calE'(\M^p,V^{\otimes p})^a$.
Note that each $T\in \calA$ evaluated on an element of $\mathscr{C}(\M,V)$ is always a sum of finitely many terms. We can equip $\calA$ with the antisymmetric wedge product \cite{Scharf,Roep}:
\begin{equation}
S\wedge T(u_1,\ldots,u_{p+q})\doteq \frac{1}{p!q!}\sum\limits_{\pi\in P_{p+q}}(-1)^{\pi}S(u_{\pi(1)},\ldots,u_{\pi(p)})T(u_{\pi(p+1)},\ldots,u_{\pi(p+q)})\,, 
\end{equation}
for $S\in \calA^p(E)$, $T\in \calA^q(E)$, $u_i\in E$. The definition of  $\wedge$ can be now extended to $\mathscr{C}(\M,V)$ by continuity. 
 The wedge product is associative and antisymmetric for homogenous elements:
\begin{equation}
(R\wedge S)\wedge T=R\wedge(S\wedge T),\qquad S\wedge T=(-1)^{|S||T|}T\wedge S\,,
\end{equation}
where $|S|$, $|T|$ denote the grades of $S$, $T$ respectively. One can define a derivative on homogenous elements and extend it by linearity to the whole of $\calA$ with the prescription \cite{Roep}:
\begin{eqnarray}
d_h: \calA^p&\rightarrow& \calA^{p-1},\quad h\in E\nonumber\\
(d_h T)(u)&\doteq&T(h\wedge u),\quad T\in \calA^p,\ u\in\calE\left(\M^{p-1},V^{\otimes (p-1)}\right)^a,\ p>0\label{der}\\
d_h T&=&0\quad T\in \calA^0\,. \nonumber
\end{eqnarray}
It is easy to verify, that the ``derivative'' $d$ defined in (\ref{der}) has the following properties:
 \begin{enumerate}
 \item $d_h$ is a graded derivation for every $h\in E$ and $S,T$ homogenous, i.e.
 \begin{equation}
 d_h(\alpha S+\beta T)=\alpha d_hS+\beta d_hT,\qquad d_h(S\wedge T)=(d_hS)\wedge T+(-1)^{|S|}S\wedge d_h T\,,
 \end{equation}
 \item \label{dis1} for each $T\in \calA$ it induces a map:
 \begin{eqnarray}
T^{(1)}: \mathscr{C}(\M,V)&\rightarrow& L(E,\R)\label{d1}\,,\\
\left<T^{(1)}(u);h\right>&\doteq& d_hT(u)\,.
 \end{eqnarray}
 Moreover $T^{(1)}(u)$ is continuous (i.e. an element of $E'$) for every $u\in \mathscr{C}(\M,V)$. An object  $\left<T^{(1)}(.),h\right>\in\calA$ corresponds to the (formal) notion of ``the left variational derivative'' of $T$. The definition given here agrees with \cite{Roep,Berez,H}. One can define also the ``right derivative'' by a suitable modification of the definition of $d_u$. 
 \item property \ref{dis1} generalizes to:
 \begin{eqnarray}
T^{(k)}: \mathscr{C}(\M,V)&\rightarrow& L_{\alte}(\underbrace{E\times\ldots\times E}_k;\R),\quad T\in\calA^p,\  k<p\,,\\
\left<T^{(k)}(u);h_k,\ldots,h_1\right>&\doteq& d_{h_k}\ldots d_{h_1}T(u)=T(h_k\wedge\ldots\wedge h_1\wedge u)\,.
 \end{eqnarray} 
  Moreover $T^{(k)}(u)$ is jointly continuous for every $u\in \mathscr{C}(\M,V)$.
 \item is ``anticommutative'' in the following sense:
 \begin{equation}
 d_{h_1}d_{h_2}T=- d_{h_2}d_{h_1}T,\quad \forall T\in \calA
 \end{equation}
 \end{enumerate}
Analogously to the commutative case one can consider a particular class of elements of $\calA^1$, namely the evaluation functionals:
\begin{equation}
\calA^1\ni\Phi_x^a,\quad \Phi_x^a(u)\doteq u(x)^a,\quad\textrm{where }x\in\M,a=1,\ldots,k \textrm{ and }u\in E\,.
\end{equation}
Applying the wedge product to these functionals we get a relation:
\begin{equation}
\Phi_x^a\wedge\Phi_y^b=-\Phi_y^b\wedge\Phi_x^a\,.
\end{equation}
\subsection{Distributions}
With the kinematical structure introduced in the previous subsection we can start to proceed toward the dynamics. As indicated in the introduction, the functional approach relies heavily on the theory of distributions. Therefore the next step is to include the functional-analytic aspects into our framework. A natural formulation  in case of fermionic fields involves distributions with values in a graded algebra. Since it needs a certain generalization of the usual setting, we devote this section to introduce some abstract mathematical structures that are needed.

We start with some basic definitions concerning distributions. For details see: \cite{Hoer,Sch0,Sch1,Sch2}. To fix the notation, we define: $\calE(\M)\doteq\Ci(\M,\R)$, $\calD(\M)\doteq\Ci_0(\M,\R)$ and  $\mathscr{S}(\M)$ denotes the space of Schwartz functions. These function spaces are equipped with the topology of uniform convergence on compact subsets of $\M$. The corresponding distribution spaces are  defined as the topological duals. 
This can be also generalized to vector-valued distributions:
\begin{definition}\label{vector}
Let $X$ be a locally convex topological vector space with a topology defined by a separable family of seminorms $\{p_\alpha\}_{\alpha\in I}$. We say that $T$ is a distribution on $\M$ with values in $X$ if it is a continuous linear mapping from $\calD(\M)$ to $X$.
\end{definition}
The theory of vector-valued distributions was developed in \cite{Sch1,Sch2}.
 A slight modification was proposed in \cite{Kom}, where one uses sequential completeness instead of quasi-completeness. Most results from the theory of scalar-valued distributions generalize to this setting. The main difficulty lies in the fact, that there is no natural notion of tensor product for locally convex vector spaces. Also the approximation property does not hold in general. The situation is much easier if the space $X$ is nuclear and (sequentially) complete (see \ref{tops} and \cite{Jar} for details). This is the case if we take $\calA$ with the weak topology $\tau_\sigma$. A few details concerning the topologies are given in \ref{tops}. By the space of distributions with values in $\calA$ we shall understand  $\calD'(\M)_c\hat{\otimes}\calA$, where $\calD'(\M)_c$ denotes the dual of $\calD(\M)$ with the topology of uniform convergence on compact sets and $\hat{\otimes}$ is the sequential completion of the tensor product with respect to the tensor product topology\footnote{In general one has to distinguish between the projective and injective tensor product, but in case of nuclear vector spaces, these notions coincide. For detailed discussion see \cite{Jar,Sch1,Sch2,Kom}.}. 

The notion of vector-valued distributions enables us to formulate the classical theory of fermionic fields in a mathematically elegant way. Note that the map (\ref{d1}) can be treated as an element of $\calD'(\M,V)_c\hat{\otimes}\calA$, i.e. a distribution with values in a Grassman algebra. One can generalize all known operations like convolution, Fourier transform and pullback to such objects. We shall recall them here to set the notation and we refer to \cite{Hoer,Sch1,Sch2} for details. Since  $\calD'(\M,V)_c\hat{\otimes}\calA\cong\calD'(\M)_c\hat{\otimes}V\hat{\otimes}\calA$ and $V$ is finite dimensional we provide the definitions for the case $V=\R$, without the loss of generality\footnote{In general one needs an inner product structure on $V$. Since $V$ is finite dimensional, it can be always introduced with the natural pairing of $V$ and $V'\cong V$. In physical examples this dual pairing is usually provided by some natural structure. In case of Dirac fields, this is the pairing between spinors and cospinors and for the ghost fields in gauge theories it is induced by the Killing form of the gauge algebra.}.
\begin{definition}
Let  $T=t\otimes f$ and $\phi=\varphi\otimes g$, where $f,g\in \calA$, $t\in\calD'(\M)_c$ and $\varphi\in\calD(\M)$. We have an antysymmetric bilinear product on $\calA$ defined as: $m_a(T,S)\doteq T\wedge S$. We define the convolution of $T$ and $\phi$ by setting:
\begin{equation}
(T*_a\phi)(x)\doteq t(\ph(x-.))\otimes m_a(f,g)\,.
\end{equation}
The extension by the sequential continuity to $\calD'(\M)_c\hat{\otimes}\calA$ defines a convolution of a vector-valued distribution with a vector-valued function.
\end{definition}
\begin{definition}
Let  $T=t\otimes f$ and $S=s\otimes g$, where $f,g\in \calA$, $t\in\calE'(\M^2)_c$ and $s\in\calD'(\M)_c$. We define the convolution of $T$ and $S$ by setting:
\begin{equation}
T*_aS\doteq \int t(.,y)s(y)dy\otimes m_a(f,g)\,,
\end{equation}
This expression is well defined by \cite[4.2.2]{Hoer} and can be extended by the sequential continuity to an arbitrary $S\in\calD'(\M)_c\hat{\otimes}\calA$,  $T\in\calE'(\M)_c\hat{\otimes}\calA$.
\end{definition}
\begin{definition}
In a similar spirit we define the evaluation of $T=t\otimes f$ on $\phi=\varphi\otimes g$,  by:
\begin{equation}
\left<T,\phi\right>^a\doteq \left<t,\varphi\right>\otimes m_a(f,g)\,,
\end{equation}
where $f,g\in \calA$, $t\in\calD'(\M)_c$ and $\varphi\in\calD(\M)$. Also this can be extended by the sequential continuity to  $\calD'(\M)_c\hat{\otimes}\calA$.
\end{definition}
\begin{definition}
Let $T\in\mathscr{S}'(\M)_c\hat{\otimes}\calA$. We define $\hat{T}\in\mathscr{S}'(\M)_c\hat{\otimes}\calA$, the Fourier transform of $T$ as:
\begin{equation}
\hat{T}(\phi)=T(\hat\phi)\qquad\phi\in\mathscr{S}(\M)\,.
\end{equation}
\end{definition}
Also the notion of the wave front set \cite{Hoer} can be extended to distributions with values in a lcvs. The case of Banach spaces was already treated in detail in \cite{Ko}.
\begin{definition}
Let $\{p_\alpha\}_{\alpha\in A}$ be the family of seminorms generating the locally convex topology on $\calA$. Let  $T\in\mathscr{S}'_c(\M)\hat{\otimes}\calA$. A point $(x,\xi_0)\in T^*\R^{n}\setminus 0$ is not
in $\textrm{WF}(T)$, if and only if $p_\alpha(\widehat{\phi T}(\xi))$
is fast decreasing as $|\xi|\rightarrow\infty$ for all $\xi$ in an open
conical neighbourhood of $\xi_0$, for some $\phi\in \calD(\M)$
with $\phi(x)\neq 0$, $\forall \alpha\in A$.
\end{definition}
With the notion of the wave front set we can define a ``pointwise product'' of two distributions $T,S\in\calD'(\M)_c\hat{\otimes}\calA$ by a straightforward extension of \cite[8.2.10]{Hoer}:
\begin{proposition}
Let $T,S\in\calD'(U)_c\hat{\otimes}\calA$, $U\subset\M$ (open). The product $T\cdot_aS$ can be defined as the pullback of $m_a\circ(T\otimes S)$ by the diagonal map $\delta:U\rightarrow U\times U$ unless $(x,\xi)\in\textrm{WF}(T)$ and $(x,\xi)\in\textrm{WF}(S)$ for some $(x,\xi)$.
\end{proposition}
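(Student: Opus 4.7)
The plan is to reduce the statement to the scalar Hörmander theorem \cite[8.2.10]{Hoer} using the tensor product structure of $\calD'(U)_c\hat{\otimes}\calA$, and then extend from elementary tensors to the whole space by sequential continuity. First, for $T=t\otimes f$ and $S=s\otimes g$ with $t,s\in\calD'(U)_c$ and $f,g\in\calA$, the composition $m_a\circ(T\otimes S)$ is literally $(t\otimes s)\otimes m_a(f,g)$, regarded as an element of $\calD'(U\times U)_c\hat{\otimes}\calA$. The pullback $\delta^*(t\otimes s)$ exists in $\calD'(U)_c$ provided the conormal of the diagonal $N_\delta=\{(x,x,\eta,-\eta):\eta\neq 0\}$ does not meet $\WF(t\otimes s)$. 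By the standard estimate $\WF(t\otimes s)\subset(\WF(t)\times\WF(s))\cup((\supp t\times\{0\})\times\WF(s))\cup(\WF(t)\times(\supp s\times\{0\}))$ this condition reduces exactly to the assumption in the proposition (up to the sign convention in $\WF$). One then defines $T\cdot_a S\doteq\delta^*(t\otimes s)\otimes m_a(f,g)$, and checks that this is an element of $\calD'(U)_c\hat{\otimes}\calA$.

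To pass from elementary tensors to general $T,S$ I would use nuclearity and sequential completeness of $\calA$ (with the weak topology), which by \ref{tops} identify $\calD'(U)_c\hat{\otimes}\calA$ with the sequential completion of the algebraic tensor product. Choose closed cones $\Gamma_1,\Gamma_2\subset\dot T^*U$ containing $\WF(T)$ and $\WF(S)$ in the vector-valued sense, with $(x,\xi)\in\Gamma_1\Rightarrow (x,-\xi)\notin\Gamma_2$; such cones exist by the hypothesis. Approximate $T$ and $S$ by sequences $T_n,S_n$ of finite sums of elementary tensors, chosen (e.g.\ by mollification in the $\M$-variable with kernels whose Fourier supports remain in small cones) so that $\WF(T_n)\subset\Gamma_1$, $\WF(S_n)\subset\Gamma_2$. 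Define $T\cdot_a S$ as the limit of $T_n\cdot_a S_n$ in the Hörmander topology of $\calD'_{\Gamma}(U)_c\hat\otimes\calA$ for a suitable $\Gamma$, using sequential continuity of $\delta^*$ on this subspace and the (joint, sequential) continuity of $m_a$.

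The main obstacle is controlling the wave front set in the vector-valued setting, where one has to argue seminorm by seminorm. Because $\WF(T)$ is defined via the family $\{p_\alpha\}_{\alpha\in A}$ of seminorms on $\calA$, the scalar-type rapid-decrease estimates that enter Hörmander's criterion have to be produced uniformly in the approximating sequence for each fixed $p_\alpha$. This works because nuclearity guarantees that the approximating tensors converge with respect to every $p_\alpha$ simultaneously, and the Fourier localisation $\widehat{\phi T_n}$ admits the same rapid decrease outside $\Gamma_1$ as $\widehat{\phi T}$; the analogous statement holds for $S_n$. Once this uniform control is established, the subsequent identification of the limit with a genuine pullback $\delta^*\bigl(m_a\circ(T\otimes S)\bigr)$ proceeds exactly as in the scalar case.

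Finally, it remains to verify that the resulting object $T\cdot_a S\in\calD'(U)_c\hat{\otimes}\calA$ is independent of the chosen approximating sequences and of the decomposition into elementary tensors. This is automatic from the sequential continuity of $m_a$ and of $\delta^*$ on $\calD'_\Gamma$: two competing sequences can be interleaved, yielding a single convergent sequence with the same limit. The antisymmetry of $m_a$ plays no analytic role; only its joint sequential continuity is used, so the construction is the same as in the commutative case treated in \cite{BDF,DF}.
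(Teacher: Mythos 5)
Your argument is correct and takes essentially the same route the paper intends: the paper offers no proof at all beyond declaring the proposition a straightforward extension of H\"ormander's Theorem 8.2.10, and your reduction to elementary tensors followed by extension via nuclearity and sequential completeness of $(\calA,\tau_\sigma)$ supplies exactly the details that declaration leaves implicit. The only point worth flagging is that the condition as printed in the paper ($(x,\xi)$ in both wave front sets) should have $(x,-\xi)$ in one of them to match the conormal condition $N_\delta\cap\WF(T\otimes S)=\emptyset$ that you correctly use.
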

Obviously we have:  $T\cdot_aS=(-1)^{|S||T|}S\cdot_aT$, whenever these expressions are well defined. In the following we shall also use a more suggestive notation: $T\cdot_aS\doteq\left<T,S\right>^a$.

Now we want to impose some regularity conditions on the distributions we want to consider. This is important for the definition of the subspace of $\calA$, where the Peierls bracket is well defined.
Let $\Xi_n\doteq\{(x_1,...,x_n,k_1,...k_n)| (k_1,...k_n)\notin (\overline{V}_+^n \cup \overline{V}^n_-)\}$ be an open cone. We denote by $\calF^n\doteq\calE'_{\Xi_n}(\M^n,V^{\otimes n})$ the subspace of $\calA^n=\calE'(\M^n,V^{\otimes n})$ consisting of distributions with wave front set contained in $\Xi_n$. We want to restrict to antisymmetric functionals that are elements of: $\calF\doteq\prod\limits_{n=0}^\infty \calF^n$. With analogy to the bosonic case we call these objects \textit{microcausal functionals}. Space $\calF$ has two important subspaces: $\calF^+\doteq\prod\limits_{p=0}^\infty \calF^{\,2p}$ and $\calF^-\doteq\prod\limits_{p=0}^\infty \calF^{\,2p+1}$. Elements of $\calF^+$ would be called even and elements of $\calF^-$, odd. Since we are using the weak topology on $\calA$, $\calF$ can be equivalently characterized as a space of functionals satisfying:
\begin{equation}
\textrm{WF}(F^{(n)}(u)) \cap (\M^n \times (\overline{V}_+^n \cup \overline{V}^n_-)) = \emptyset\qquad \forall u\in\euC(\M,V),\ n\in\Na\,,\label{WFcond}
\end{equation}
where $F^{(n)}(u)$ is treated as an element of $\calE'(\M^n,V^{\otimes n})_c\widehat{\otimes}\calA$.
We can impose a more restrictive condition on the wave front set to define a subspace $\calF_{loc}^p\subset\calF^p$ consisting of \textit{local functionals}. A functional $F\in\calF^p$ is called local if it satisfies:
\begin{equation}
\textrm{WF}(F)\perp T\Delta^p(\M)\,,
\end{equation}
and is supported on $\Delta^p(\M)$,
where $\Delta^p(\M)\doteq\left\{(x,\ldots,x)\in \M^p:x\in \M\right\}$ is the thin diagonal of $\M^p$.
This more abstract notion agrees with the notion of locality (\ref{local}) mentioned in the introduction.
 $\calF$ can be equipped with various topologies, for example the weak topology $\tau_\sigma$ inherited from $\calA$. Since we want to have control of the wavefront sets of functional derivatives, we shall use instead the so called H\"ormander topology \cite{Hoer}. 
 Let $\Gamma_n\subset \Xi_n$ be a closed cone contained in $\Xi_n$. We introduce (following \cite{Hoer,Bae,BDF}) the family of seminorms on $\calE'_{\Gamma_n}(\M^n,V^{\otimes n})$ given by: $p_{n,\phi,C,k} (T) = \sup_{\xi\in C}\{(1 + |\xi|)^k |\widehat{\phi T}(\xi)|\}$, where the index set consists of $(n,\phi,C,k)$ such that $k\in \Na_0$, $\phi\in \calD(U)$ and $C$ is a closed cone in $\R^n$ with $(\supp ( \phi ) \times C) \cap\Gamma_n = \emptyset$. These seminorms, together with the seminorms of the weak topology provide a defining system for a locally convex topology denoted by $\tau_{\Gamma_n}$. To control the wavefront set properties inside open cones, we take an inductive limit. It is easy to see that, to form this inductive limit, one can choose the family of closed cones contained inside $\Xi_n$ to be countable. The resulting topology will be denoted by $\tau_{\Xi_n}$. The space $\calF$ can be now equipped with the direct product topology denoted by $\tau_{\Xi}$. Since each of the topologies $\tau_{\Gamma_n}$ is nuclear \cite{BDF} so $\tau_{\Xi}$ is nuclear as well. It has also the property of sequential completeness, so from now on we shall always take $\calF$ with this topology, unless stated differently. For $F\in\calF^n=\calE'_{\Xi_n}(\M^n,V^{\otimes n})$ we shall often use the notation:
\begin{equation}
F(u)=\sum\limits_{a_1=1}^k\ldots\sum\limits_{a_p=1}^k\int dx_1\ldots dx_p\, F_{a_1\ldots a_p}(x_1,\ldots,x_p) u^{a_1\ldots,a_p}(x_1,\ldots,x_p)\,,\label{Fp0}
\end{equation} 
where $F_{a_1\ldots,a_p}(x_1,\ldots,x_p) $ is an integral kernel of a compactly supported distribution with the wavefront set contained in $\Xi_p$. Since $\Phi_{x_1}^{a_1}\wedge\ldots\wedge\Phi_{x_p}^{a_p}(u)=u^{a_1\ldots,a_p}(x_1,\ldots,x_p)$ we can write (\ref{Fp0}) formally and analogously to the bosonic case \cite{DF02,DF04}, as:
\begin{equation}
F(u)=\sum\limits_{a_1=1}^k\ldots\sum\limits_{a_p=1}^k\int dx_1\ldots dx_p\, F_{a_1\ldots,a_p}(x_1,\ldots,x_p) \Phi_{x_1}^{a_1}\wedge\ldots\wedge\Phi_{x_p}^{a_p}(u)\,,\label{Fp}
\end{equation}
In the following we shall suppress the vector space indices of the $\Phi$'s whenever possible. With this notation (\ref{Fp}) shall be written as: $F(u)=\int dx_1\ldots dx_p\, F(x_1,\ldots,x_p) \Phi_{x_1}\wedge\ldots\wedge\Phi_{x_p}(u)$.
%
\section{Dynamical structure}\label{Dyn}
%
\subsection{Equations of motion}\label{eqom}
%
A generalized lagrangian \cite{BDF} $S$ is defined to be a map $S:\calD(\M)\rightarrow\calF_\loc$ such that
\begin{eqnarray}
1.&&\supp(S(f))\subseteq \supp(f)\label{lagsupp}\,,\\
2.&&S(f\!+\!g\!+\!h)=S(f+g)\!-\!S(g)\!+\!S(g+h),\ \textrm{if}\ \supp(f)\cap\supp(h)=\emptyset\,.\label{lagadd}
\end{eqnarray}
The second condition is called \textit{additivity}. One can think of it as a weaker replacement for the notion of linearity. It is crucial in the quantum field theory, where one uses the test function as a localized coupling constant \cite{BDF}. To admit interaction terms that  
are of higher order in the coupling (for example $\sim f^2$) one has to drop the linearity condition, whereas the additivity still holds. 

The ``variation'' of $S(f)$ is understood in the sense of (\ref{der}) namely we require that: $ \left<S(f)^{(1)}(u),h\right>=0$ for all $h\in\calD(\M,V)$ and $f\in\calD(\M)$ such that $f\equiv 1$ on $K\doteq\supp\, h$. For this choice of $f$ we use notation:
\begin{equation}
\left<S(1)^{(1)}(u),h\right>=0\,.\label{eom}
 \end{equation}
Since we can choose $K$ arbitrary large, it follows that the equations of motion must hold true on the whole $\M$. If $S(f)\in\calF_\loc^2$, we can interpret (\ref{eom}) as a system of partial differential equations for $u\in E=\calE(\M,V)$. This is analogous to the bosonic case and allows us to define a solution space $E_S\subset E$. For higher order interactions this concept has to be modified. If the nonlinearity is present, the equation (\ref{eom}) contains Grassman-valued objects. Hence it cannot be seen as an equation for $C$-valued functions. Therefore it is more convenient to work purely on the algebraic level.  We define an ideal $\mathcal{J}_S\subset\calF$ as the one generated (in the algebraic and topological sense) by the set $\left\{\left<S(1)^{(1)}(.),h\right>\right\}_{h\in \calD(\M,V)}\subset\calF$ (equations of motion). Then we can construct the quotient $\calF/\mathcal{J}_S$ and write $S(f)$  in terms of equivalence classes $[\Phi_x]\in\calF/\mathcal{J}_S$.

One can linearize a given lagrangian $ S(f)$ in terms of the second derivative. 
We always assume that the lagrangian is even i.e. $S(f)^{(2)}$ is an element of $\calE'(\M^2,V^{\otimes 2})^a\hat{\otimes}\calF^+$. In some cases it can be inverted, so there exists $\Delta^*\in \calD'(\M^2,V^{\otimes 2})^a\hat{\otimes}\calF^+$ such that:
\begin{equation}
\left<S(1)^{(2)}*_a\Delta^*;h_1,.\right>=\delta*_ah_1\,.
\end{equation}
Formally this can be written as:
\begin{equation}
S(f)^{(2)}*_a \Delta^*=\delta*_a\1\label{delt}\,.
\end{equation}
By retarded and advanced Green's functions $\Delta^{R/A}$ we mean solutions of (\ref{delt}) satisfying in addition:
\begin{eqnarray}
\supp(\Delta^R)&\subset&\{(x,y)\in \M^2| y\in J^-(x)\}\label{retarded}\,,\\
\supp(\Delta^A)&\subset&\{(x,y)\in \M^2| y\in J^+(x)\}\label{advanced}\,.
\end{eqnarray}
Particularly, if $S(f)^{(2)}\in\calE'(\M^2,V^{\otimes 2})^a$ and it is a strictly hyperbolic operator, it can be shown that it has retarded and advanced Green's functions $\Delta^R$, $\Delta^A$.
\subsection{M{\o}ller maps}%
In section \ref{eqom} we defined the ``on-shell'' algebra as the quotient  $\calF/\calJ_S$, where $\calJ_S$ is the ideal generated by the equations of motion.  Analogously to the classical theory of bosonic fields we would like to compare theories with different actions $S_1$, $S_2$. This can be done most conveniently at the algebraic level. One can construct analogs of off-shell M{\o}ller maps \cite{BreDue}, $r_{S_1,S_2}:\calF\rightarrow\calF$, which intertwines the corresponding ideals $\calJ_{S_1}$ and $\calJ_{S_2}$. We require $r_{S_1,S_2}$ to have the following properties:
\begin{enumerate}
\item if $G\in\calJ_{S_1}$, then $r_{S_1,S_2}G\in\calJ_{S_2}$ (the intertwining property)\,,\label{id1}
\item $r_{S_2,S_3}\circ r_{S_1,S_2}=r_{S_1,S_3}$\,,\label{comp}
\item $(r_{S_1,S_2}G)(u)=G(u)$, $u\in\euC(\M,V)$ if $\supp(u)\cap(\supp(S_1-S_2)+\overline{V}_+)=\emptyset$\,,\label{suppr}
\item $G\mapsto r_{S_1,S_2}(G)$ is a homomorphism of $\calF$\,.\label{hom}
 \setcounter{saveenum}{\value{enumi}}
\end{enumerate}
Now let $S_1=S+\lambda F$ and $S_2=S$ for $S,F\in \calF_\loc^+$, $|S|=2$.
We assume that for $S^{(2)}(x,y)$ there exist retarded and advanced Green's functions such that $\Delta^{R}(x,y)=-(\Delta^{A}(y,x))^T$. We want to construct $r_{S+\lambda F,S}$ as a series in $\lambda$. 
Assume first that $r_{S+\lambda F,S}$ exists and for fixed $G$ we have a smooth (in the sense of calculus on locally convex vector spaces) map $r_{.,S}(G):\calF_\loc^+\rightarrow\calF_\loc$. Then we can use the generalized Taylor series expansion to obtain:
\begin{equation}
r_{S+\lambda F,S}(G)=\sum\limits_{k=0}^\infty\frac{\lambda^k}{k!}(d^k\,r_{.,S}(G))(S)[F^{\otimes k}]\doteq\sum\limits_{k=0}^\infty\frac{\lambda^k}{k!}R_{S,k}(F^{\otimes k},G)\,.\label{TS}
\end{equation}
To construct $r_{S+\lambda F,S}$ we shall try to reverse this reasoning and define $r_{S+\lambda F,S}$ by its power series. Each term should be a $(k+1)$-linear map, symmetric in the first $k$ arguments.
We require that the $0$-th order term is the identity map, and the first order term is the retarded product of $F,G\in\calF_\loc$, that is
\begin{equation}
R_{S,1}(F,G)=R_S(F,G)=\frac{d}{d\lambda}\Big|_{\lambda=0}r_{S+\lambda F,S}(G)\,.
\end{equation}
After \cite{DF02} we call higher order terms:  \textit{higher order retarded products}. From the conditions on $r_{S+\lambda F,S}$, we deduce those, which we want to impose on $R_{S,k}(F^{\otimes k},.)$. To fulfill \ref{id1} we postulate that:
\begin{equation}
r_{S+\lambda F,S}\left(\left<S^{(1)}+\lambda F^{(1)},h\right>\right)=\left<S^{(1)},h\right>\,,\label{id2}
\end{equation}
where $h\in\calD(\M,V)$ and $\left<S^{(1)}+\lambda F^{(1)},h\right>\in\calJ_{S+\lambda F}$. The fact that $\calJ_{S+\lambda F}$ is generated by elements of this form, together with condition \ref{hom} already suffices to fulfill \ref{id1}. From (\ref{id2}) follows a recursive condition on the retarded products:
\begin{equation}
R_{S,k}\left(F^{\otimes k}, \left<S^{(1)},h\right>\right)=-kR_{S,k-1}\left(F^{\otimes (k-1)},\left<F^{(1)},h\right>\right),\quad k>0\label{recur1}
\end{equation}
Particularly, for $k=1$ we have:
\begin{equation}
R_S\left(F, \left<S^{(1)},h\right>\right)=-\left<F^{(1)},h\right>\,.\label{ret0}
\end{equation}
There is still a big freedom in defining $R_S$. Particularly, we can use the analogy with bosonic fields and define it first for $F\in\calF_\loc^2$. In this case the equations of motion can be interpreted in terms of a dynamical system and one can define the corresponding M{\o}ller maps on the configuration space $E$. Let $E_{S}$ and  $E_{S+\lambda F}$ be the solution spaces corresponding to actions $S$ and $S+\lambda F$. It was already shown  in \cite{DF} that the map $\tilde{r}_{S+\lambda F,S}:E_{S}\rightarrow E_{S+\lambda F}$ can be constructed perturbatively. It was proven for the scalar field but it can be easily generalized to the fermionic case for $|F|=2$. The existence of nonperturbative solutions will be discussed in \cite{Pedro}. Therefore we postulate two more conditions on $r_{S+\lambda F,S}$:
\begin{enumerate}
\setcounter{enumi}{\value{saveenum}}
\item If $|F|=|S|=2$ then: $r_{S+\lambda F,S}(\calF^1)\subseteq\calF^1$\,,\label{RF2}
\item $r_{S+\lambda F,S}(G)(u)=G\circ\tilde{r}_{S+\lambda F,S}(u),\qquad u\in E,\quad F,S\in\calF^2,\ G\in\calF^1$\,.
 \setcounter{saveenum}{\value{enumi}}\label{RFtilde}
\end{enumerate}
From conditions \ref{RF2}, \ref{RFtilde} and proposition 1 of \cite{DF} it is clear that for $F,S\in\calF^2_\loc$ and $G\in\calF^1_\loc$ the retarded product can be expressed as:
\begin{equation}
R_S(F,G)=-\left<G^{(1)},\Delta^R*_aF^{(1)}\right>^a\,.\label{ret1}
\end{equation}
To extend this definition to higher order functionals, we require the (graded) Leibniz rule in the left and right argument:
\begin{enumerate}
\setcounter{enumi}{\value{saveenum}}
\item $R_S(F_1\wedge F_2,G)=F_1\wedge R_S(F_2,G)+R_S(F_1,G)\wedge F_2$\,,
\item $R_S(F,G_1\wedge G_2)=G_1\wedge R_S(F,G_2)+R_S(F,G_1)\wedge G_2$\,.
\end{enumerate}
By continuity we can now extend $R_S$ to general $F\in\calF_\loc^+$ and $G\in\calF_\loc$. It is given by the same formula, namely (\ref{ret1}). One can check with an explicit calculation, that (\ref{ret0}) is fulfilled. Now we proceed like in \cite{DF02}. Condition \ref{comp} implies that:
\begin{equation}
r_{S+\lambda F_1,S}\circ r_{S+\lambda F_1+\mu F_2,S+\lambda F_1}=r_{S+\lambda F_1+\mu F_2,S}\,.
\end{equation}
The comparison of the coefficients with respect to the powers of $\mu$ yields:
\begin{equation}
r_{S+\lambda F_1,S}(R_{S+\lambda F_1}(F_2,G))=\frac{d}{d\mu}\Big|_{\mu=0}r_{S+\lambda F_1+\mu F_2,S}(G)\,.
\end{equation}
Again we can compare the coefficients with respect to the powers in $\lambda$ and obtain the following recursive condition:
\begin{equation}
  R_{S,n+1}(F_1^{\otimes n}\otimes F_2,G)=-\sum_{l=0}^n \binom{n}{l}
R_{S,l}\Bigl( F_1^{\otimes l},(-1)^{|F_2|+1}\left<F_{2}^{(1)}, 
        \Delta_{S+\lambda F_1}^{A\,(n-l)}*_a
       G^{(1)}\right>^a\Bigr),\label{nreta}
\end{equation}
where
\begin{gather}
  \Delta_{S+\lambda F_1}^{A\,(k)}\doteq
\frac{d^k}{d\lambda^k}\Big|_{\lambda =0}
\Delta_{S+\lambda F_1}^{A}=(-1)^k k!\, \Delta_{S}^A*_a
 F_{1}^{(2)}*_a\Delta_S^{A}*_a\ldots*_aF_{1}^{(2)}*_a\Delta_S^{A}\>\,.\label{nretb}\end{gather}
The formula for $ \Delta_{S+\lambda F_1}^{A\,(k)}$ is the graded counterpart of the relation (42) in \cite{DF02}. Particularly we can set $F_1=F_2=F$ and obtain:
\begin{equation}
 R_{S,n+1}(F^{\otimes (n+1)},G)=-\sum_{l=0}^n \binom{n}{l}
R_{S,l}\Bigl( F^{\otimes l},\left<F^{(1)}(x),
        \Delta_{S+\lambda F}^{A\,(n-l)}*_a
        G^{(1)}\right>^a\Bigr)\,.\label{nretc}
\end{equation}
Now we can prove that the Taylor series (\ref{TS}) with coefficients given by (\ref{nretc}) defines a map $r_{S+\lambda F,S}:\calJ_{S+\lambda F,S}\rightarrow\calJ_{S}$. First we have to check if the recursive condition (\ref{recur1}) is fulfilled. This is the case since we have:
\begin{multline}
R_{S,k+1}\left(F^{\otimes k+1}, \left<S^{(1)},h\right>^a\right)=-\sum_{l=0}^{k} \binom{k}{l}
R_{S,l}\Bigl( F^{\otimes l},\left<F^{(1)},\left<\Delta_{S+\lambda F}^{A\,(k-l)}*_aS^{(2)};.,h\right>^a\right>^a\Bigr)=
\nonumber\\
        =k\sum_{l=0}^{k-1} \binom{k-1}{l}
R_{S,l}\Bigl( F^{\otimes l},\left<F^{(1)},\left<\Delta_{S+\lambda F}^{A\,(k-l-1)}*_aF^{(2)};.,h\right>^a\right>^a
        \Bigr)-R_{S,k}(F^{\otimes k},\left<F^{(1)},h\right>^a)=\nonumber\\
        =-(k+1)R_{S,k}(F^{\otimes k},\left<F^{(1)},h\right>^a)\,.
         \end{multline}
Next we show that $G\mapsto r_{S+\lambda F,S}(G)$ is a homomorphism of $\calF$. 
\begin{proposition}
 Let $S,F\in\calF^+$ and let $r$ be defined  by the Taylor series (\ref{TS}) with the coefficients given by (\ref{nretc}). Then it holds (in every order in $\lambda$):
\begin{equation} 
r_{S+\lambda F,S}(G\wedge H)=r_{S+\lambda F,S}(G)\wedge r_{S+\lambda F,S}(H)\,.\label{compo}
\end{equation}
\end{proposition}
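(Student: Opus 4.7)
The identity (\ref{compo}) is to be read as an identity of formal power series in $\lambda$. Substituting the Taylor expansion (\ref{TS}) on both sides and comparing coefficients of $\lambda^n$, the claim reduces to
\begin{equation*}
R_{S,n}(F^{\otimes n},G\wedge H)=\sum_{k=0}^{n}\binom{n}{k}\,R_{S,k}(F^{\otimes k},G)\wedge R_{S,n-k}(F^{\otimes (n-k)},H),
\end{equation*}
which I would prove by induction on $n$. The case $n=0$ is trivial since $R_{S,0}=\mathrm{id}$, and $n=1$ is precisely the graded Leibniz rule in the right argument imposed on $R_S$ as condition~8, which is easily verified directly from the explicit formula (\ref{ret1}) using property~1 of $d_h$ listed in section~\ref{asf} (the graded signs being absorbed into the ordering of the $\wedge$-product).

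For the inductive step, assume the identity holds up to order $n$ and apply the recursion (\ref{nretc}) at order $n+1$ with $G\wedge H$ in place of $G$. Expand $(G\wedge H)^{(1)}=G^{(1)}\wedge H+(-1)^{|G|}G\wedge H^{(1)}$ by the graded Leibniz rule. Since $F\in\calF^+$ is even, $F^{(1)}$ is odd, while $\Delta^{A(k)}_{S+\lambda F}$ is even. A direct check then shows that the two signs $(-1)^{|G|}$ — one generated by the graded Leibniz rule, the other by commuting $F^{(1)}$ past $G$ inside the pairing — cancel, yielding
\begin{equation*}
\langle F^{(1)},\Delta^{A(k)}_{S+\lambda F}*_a(G\wedge H)^{(1)}\rangle^a=\langle F^{(1)},\Delta^{A(k)}_{S+\lambda F}*_aG^{(1)}\rangle^a\wedge H+G\wedge\langle F^{(1)},\Delta^{A(k)}_{S+\lambda F}*_aH^{(1)}\rangle^a.
\end{equation*}
This splits the right-hand side of (\ref{nretc}) into a $G$-differentiated and an $H$-differentiated sum, and applying the inductive hypothesis to each $R_{S,l}$ with $l\le n$ factors the wedge product between a $G$-side and an $H$-side.

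The remaining step is a combinatorial reassembly. On the $G$-differentiated side, setting $l=i+m$ and using the Vandermonde-type identity $\binom{n}{i+m}\binom{i+m}{i}=\binom{n}{m}\binom{n-m}{i}$, the inner sum over $i$ at fixed $m$ is precisely the right-hand side of (\ref{nretc}) for order $n-m+1$, and therefore collapses to $R_{S,n-m+1}(F^{\otimes(n-m+1)},G)$; reindexing by $k=n-m+1$ gives a contribution $\sum_{k=1}^{n+1}\binom{n}{k-1}R_{S,k}(F^{\otimes k},G)\wedge R_{S,n+1-k}(F^{\otimes(n+1-k)},H)$. A symmetric argument on the $H$-differentiated side yields $\sum_{k=0}^{n}\binom{n}{k}R_{S,k}(F^{\otimes k},G)\wedge R_{S,n+1-k}(F^{\otimes(n+1-k)},H)$. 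Adding the two sums and invoking Pascal's rule $\binom{n}{k-1}+\binom{n}{k}=\binom{n+1}{k}$ produces the required formula at order $n+1$. The main technical obstacle throughout is the sign bookkeeping in the inductive step: one must verify that, despite $F^{(1)}$ being odd and $G,H$ of arbitrary grade, all phases generated by the graded Leibniz rule and the graded commutativity of $\wedge$ cancel pairwise so that no signs survive in the final symmetric formula. Once the identity is established for local $G,H\in\calF_\loc$ (where pointwise products of distributions are well defined), continuity of the retarded products in the topology $\tau_\Xi$ extends it to arbitrary $G,H\in\calF$.
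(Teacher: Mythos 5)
Your proposal is correct and follows essentially the same route as the paper: reduce \eqref{compo} to the coefficient identity $R_{S,n}(F^{\otimes n},G\wedge H)=\sum_k\binom{n}{k}R_{S,k}(F^{\otimes k},G)\wedge R_{S,n-k}(F^{\otimes(n-k)},H)$, prove it by induction using the recursion \eqref{nretc}, the graded Leibniz rule on $(G\wedge H)^{(1)}$, the Vandermonde-type reindexing $\binom{n}{l}\binom{l}{k}=\binom{n}{k}\binom{n-k}{l-k}$ to recognize the inner sum as the recursion for a higher-order retarded product, and Pascal's rule to reassemble. Your treatment is in fact slightly more careful than the paper's on the base case (the paper labels the trivial $R_{S,0}=\mathrm{id}$ case as $n=1$) and on the final sign bookkeeping and continuity extension, which the paper leaves implicit.
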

\begin{proof}
First we show the identity:
\begin{equation}
R_{S,n}(F^{\otimes n}, G\wedge H)=\sum\limits_{k=0}^n\left(n\atop k\right)R_{S,k}(F^{\otimes k}, G)\wedge R_{S,n-k}(F^{\otimes (n-k)}, H)\,.\label{ind0}
\end{equation}
This can be proved by induction. For $n=1$ we have: $l.h.s.=G\wedge H=r.h.s.$. Now we assume that (\ref{ind0}) is satisfied at the order $n$ and prove the induction step:
\begin{equation}
R_{S,n+1}(F^{\otimes (n+1)}, G\wedge H)=-\sum_{l=0}^n \binom{n}{l}
R_{S,l}\Bigl( F^{\otimes l},\left<F^{(1)},
        \Delta_{S+\lambda F}^{A\,(n-l)}*_a (G\wedge H)^{(1)}\right>^a\Bigr)\,.\label{GH}
\end{equation} 
First we apply the graded Leibniz rule. With the use of the induction hypothesis and after changing the order of summation and renaming the indices the first term of (\ref{GH}) can be written as:
\begin{multline}
-\sum_{l=0}^n \binom{n}{l}
R_{S,l}\left( F^{\otimes l},\left< F^{(1)},\Delta_{S+\lambda F}^{A\,(n-l)}*_a
       G^{(1)}\right>^a\wedge H\right)=\\
=-\sum_{k=0}^n\sum\limits_{l=k}^n \binom{n}{l}\binom{l}{k}
R_{S,l-k}\left( F^{\otimes (l-k)},\left<F^{(1)}, 
        \Delta_{S+\lambda F}^{A\,(n-l)}*_a
       G^{(1)}\right>^a\right)\wedge R_{S,k}(F^{\otimes k},H)=\\
=-\sum\limits_{k=0}^n \sum\limits_{l=0}^k \binom{n}{k}\binom{k}{l}R_{S,l}\left(F^{\otimes l},\left<F^{(1)},
        \Delta_{S+\lambda F}^{A\,(k-l)}*_a
       G^{(1)}\right>^a\right)\wedge R_{S,n-k}( F^{\otimes (n-k)},H)=\\
       =-\sum\limits_{k=0}^n \binom{n}{k} R_{S,k+1}( F^{\otimes (k+1)},G)\wedge R_{S,n-k}( F^{\otimes (n-k)},H)\,.
\end{multline}
The last equality is a consequence of the definition of higher order retarded products. It follows now that:
\begin{multline}
R_{S,n+1}(F^{\otimes (n+1)}, G\wedge H)=\sum\limits_{k=0}^n\binom{n}{k}R_{S,k+1}(F^{\otimes(k+1)},G)\wedge R_{S,n-k}(F^{\otimes(n-k)},H)+\\
+(-1)^{|H||G|}
\sum\limits_{k=0}^n\binom{n}{k}R_{S,k+1}(F^{\otimes(k+1)},H)\wedge R_{S,n-k}(F^{\otimes(n-k)},G)=\\
=\sum\limits_{k=0}^{n+1}\binom{n+1}{k}R_{S,k}(F^{\otimes k},G)\wedge R_{S,n-k+1}(F^{\otimes(n-k+1)},H)
\end{multline}
This proves the induction step. From the induction principle it follows that (\ref{ind0}) holds. Yet (\ref{ind0}) is simply the Taylor expansion of (\ref{compo}), so (\ref{compo}) holds in the sense of formal power series.
\end{proof}
This proves condition \ref{hom}. Together with (\ref{recur1}) this implies that condition \ref{id1} is fulfilled as well. Condition \ref{suppr} is fulfilled because of support properties of $\Delta^R$ and \ref{comp} holds from the definition. It remains to show that the series (\ref{TS})  defines indeed an element of $\calF$. To do it we have to check if in each degree we obtain a finite expression. First we assume that $F$ has only terms of degree higher than $2$. Then we get convergence in each grade, since $R_{S,n}(F^{\otimes n},G)$ has a degree that increases with $n$, i.e. $|R_{S,n}(F^{\otimes n},G)|=|G|+n(|F|-2)>|R_{S,n-1}(F^{\otimes (n-1)},G)|$ and therefore the sum $(r_{S+\lambda F,S}(G))(u)=\sum\limits_{k=0}^\infty\frac{\lambda^k}{k!}(R_{S,k}(F^{\otimes k},G))(u)$, $u\in\euC(\M,V)$ has only finitely many non-vanishing terms. For $|F|=2$ we cannot use this argument, but we can instead construct $r_{S+\lambda F,S}$ with the use of $\tilde{r}_{S+\lambda F,S}$. The existence of $\tilde{r}$ has to be showed with the same argument as for the bosonic case \cite{Pedro}.

As a final remark, we discuss the existence of an inverse mapping: $r_{S+\lambda F,S}^{-1}:\calJ_S\rightarrow\calJ_{S+\lambda F,S}$ for $|F|>2$. This map can always be defined as a formal power series, since the first term in the expansion (\ref{TS}) is the identity. Moreover, as discussed above, this provides a well defined element of $\calF$.

The retarded product given by formula (\ref{ret1}) can be extended in the left argument from $\calF^+_\loc$ to $\calF_\loc$ by postulating (for homogenous elements):
\begin{equation}
R_S(F,G)=(-1)^{|F|+1}\left<F^{(1)},\Delta^R_{S}*_aG^{(1)}\right>^a\,.\label{ret2}
\end{equation}
We can extend this definition to the whole $\calF_\loc$ by linearity and continuity. The additional sign factor is necessary since we use only left derivatives instead of right and left ones. Analogously, the advanced product is defined as:
\begin{equation}
A_S(F,G)=(-1)^{|F|+1}\left<F^{(1)},\Delta^A_{S}*_aG^{(1)}\right>^a\,.\label{adv2}
\end{equation}
\subsection{Peierls bracket}
Let $\Delta(x,y)=\Delta^R(x,y)-\Delta^A(x,y)=( \Delta(y,x))^T$. For $F\in\calF_\loc^{\,p}$, $G\in\calF_\loc^{\,q}$ we can define a  Poisson structure by a definition analogous to the bosonic case:
\begin{equation}
\{F,G\}_S=R_S(F,G)-A_S(F,G)\,,
\end{equation}
where the retarded and advanced products are given by (\ref{ret2}), (\ref{adv2}). This can be written as:
\begin{equation}
\{F,G\}_S=(-1)^{|F|+1}\left<F^{(1)},\Delta_{S}*_aG^{(1)}\right>^a\,.\label{peierls2}
\end{equation}
We can extend this definition also to $F\in\calF^p$, $G\in\calF^q$.
By \cite[Thm. 8.2.10]{Hoer} the pointwise product of distributions appearing in 
(\ref{peierls2}) is well defined. 
Since $S^{(2)}$ is assumed to be even, so is $\Delta$. Therefore, for homogenous elements, we have the graded anticommutativity:
\begin{equation}
\{F,G\}_S=-(-1)^{|F||G|}\left<G^{(1)},\Delta_{S}*_aF^{(1)}\right>^a=-(-1)^{|F||G|}\{G,F\}_S\,.
\end{equation}
By modifying slightly the proof of the Jacobi identity for the bosonic fields given in \cite{Jac} one can show that for homogenous elements  $F,G,H\in\calF$:
\begin{equation}
\{\{F,G\}_S,H\}_S(-1)^{|F||H|}+\{\{G,H\}_S,F\}_S(-1)^{|F||G|}+\{\{H,F\}_S,G\}_S(-1)^{|G||H|}=0\,.
\end{equation}
Also the graded derivation law is fulfilled, namely
\begin{equation}
\{F\wedge G, H\}_S=(-1)^{|G||H|}\{F, H\}_S\wedge G+F\wedge\{G, H\}_S\,.
\end{equation}
In section \ref{eqom} we defined the ideal of $(\calF,\wedge)$ generated by equations of motion for a given action functional $S$. We denoted this ideal as $\calJ_S$. Now we prove that this is also a Poisson ideal with respect to the $\{.,.\}_S$ structure:
\begin{proposition}
Let $\calJ_S$ be the $(\calF,\wedge)$-ideal generated by elements of the form $\left<S^{(1)},h\right>$. Then $\calJ_S$ is a Poisson ideal of the Poisson algebra $(\calF,\{.,.\}_S)$.
\end{proposition}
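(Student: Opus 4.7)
The plan is to use the graded Leibniz rule already established for $\{\cdot,\cdot\}_S$ to reduce the statement to the generators of $\calJ_S$, and then to show that the bracket of any generator with any functional vanishes outright.

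First, suppose we have verified that $\{\langle S^{(1)},h\rangle,G\}_S=0$ for every $h\in\calD(\M,V)$ and every $G\in\calF$. A typical generator-containing element is of the form $A\wedge\langle S^{(1)},h\rangle\wedge B$ with $A,B\in\calF$, and applying the graded derivation rule to $\{A\wedge\langle S^{(1)},h\rangle\wedge B,\,G\}_S$ yields a finite sum whose summands each either retain $\langle S^{(1)},h\rangle$ as a wedge factor or contain $\{\langle S^{(1)},h\rangle,G\}_S=0$; in either case the summand lies in $\calJ_S$. Because $\calJ_S$ is the ideal generated both algebraically and topologically by $\{\langle S^{(1)},h\rangle\}_{h\in\calD(\M,V)}$, and $\{\cdot,G\}_S$ is continuous on $\calF$ in the topology $\tau_\Xi$ (which is precisely where the wavefront-set control enters), the conclusion then extends from the algebraic subideal to its topological closure $\calJ_S$.

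To handle the generator case, set $F_h:=\langle S^{(1)},h\rangle$. Since $S\in\calF^+$, a single variation produces an odd element, so $|F_h|=1$ and the prefactor $(-1)^{|F_h|+1}$ in (\ref{peierls2}) equals $+1$. A second variation gives
\begin{equation*}
\langle F_h^{(1)}(u);k\rangle=d_kd_hS(u)=\langle S^{(2)}(u);k,h\rangle,
\end{equation*}
i.e., $F_h^{(1)}=S^{(2)}*_a h$ regarded as a distribution in the remaining slot. Substituting into (\ref{peierls2}) and rearranging the chained $*_a$-pairings,
\begin{equation*}
\{F_h,G\}_S=\bigl\langle h,\,(S^{(2)}*_a\Delta_S)*_aG^{(1)}\bigr\rangle^a.
\end{equation*}

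The decisive input is then the defining Green's function identity (\ref{delt}): $S^{(2)}*_a\Delta_S^R=S^{(2)}*_a\Delta_S^A=\delta*_a\1$, whence $S^{(2)}*_a\Delta_S=S^{(2)}*_a(\Delta_S^R-\Delta_S^A)=0$. Therefore $\{F_h,G\}_S=0$, which lies trivially in $\calJ_S$. The main technical hurdle is not the cancellation itself but the book-keeping: tracking the signs appropriate to the antisymmetric $*_a$-pairings, and verifying that the rearrangement of convolutions above is legitimate. The latter rests on associativity of the pointwise product $\cdot_a$ of vector-valued distributions, which is in turn controlled by H\"ormander's criterion: the compact support and diagonal-transversality of the local distribution $S^{(2)}$, the standard form of $\WF(\Delta_S)$, and the microcausal condition (\ref{WFcond}) on $G^{(1)}(u)$ together ensure that every iterated pairing in the computation is well defined as an element of $\calD'(\M)_c\hat\otimes\calA$. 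Once these conditions from section~\ref{asf} are marshalled, the argument collapses to the one-line cancellation above.
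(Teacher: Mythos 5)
Your proof is correct and follows essentially the same route as the paper: both arguments reduce the claim to the generators via the graded Leibniz rule and then rest on the single cancellation $S^{(2)}*_a\Delta_S=S^{(2)}*_a(\Delta_S^R-\Delta_S^A)=0$ coming from the defining relation (\ref{delt}), with the surviving Leibniz terms lying in $\calJ_S$ because they retain $\langle S^{(1)},h\rangle$ as a wedge factor. The only cosmetic differences are that you isolate $\{\langle S^{(1)},h\rangle,G\}_S=0$ as a separate step where the paper computes $\{F\wedge\langle S^{(1)},h\rangle,G\}_S$ in one pass, and that your assertion $|F_h|=1$ should read ``$|F_h|$ is odd'' (since $S\in\calF^+$ need not be purely of degree $2$), which does not affect the sign factor or the conclusion.
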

\begin{proof}
A general element of $\calJ_S$ can be written as a limit of a sequence of elements of the form:  $F\wedge~\left<S^{(1)},h\right>$ for $F\in\calF$, $h\in \calD(\M,V)$. Inserting this in formula (\ref{peierls2}) yields:
\begin{equation}
\{F\wedge\left<S^{(1)},h\right>,G\}_S=(-1)^{|F|+|S|}\left<\left(F\wedge\left<S^{(1)},h\right>\right)^{(1)},\Delta*_a G^{(1)}\right>^a\,.
\end{equation}
With the use of (\ref{delt}) and the definition of the causal propagator we obtain:
\begin{eqnarray}
\{F\wedge\left<S^{(1)},h\right>,G\}_S&=&(-1)^{|F|}\left(\left<F^{(1)},\Delta*_a G^{(1)}\right>^a\right)\wedge\left<S^{(1)},h\right>=\\
&=&-(\{F,G\}_S)\wedge\left<S^{(1)},h\right>\in\calJ_S\,.\nonumber
\end{eqnarray}
\end{proof}
The above proposition shows that we can now take the quotient of $\calF$ by $\calJ_S$ and we obtain the Poisson algebra of observables: $(\calF_S,\{.,.\}_S)\doteq(\calF/\calJ_S,\{.,.\}_S)$. Different action functionals define different Poisson structures on $\calF$. It turns out, that (like in the bosonic case \cite{DF}) those structures are intertwined by M{\o}ller maps.
\begin{proposition}
The retarded (advanced) M{\o}ller maps are canonical transformations for Poisson structures induced by action functionals. Namely:
\begin{equation}
\{r_{S_2,S_1}(F),r_{S_2,S_1}(G)\}_{S_1}=r_{S_2,S_1}(\{F,G\}_{S_2})\,.\label{cantr}
\end{equation}
\end{proposition}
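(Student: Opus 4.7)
The plan is to verify (\ref{cantr}) perturbatively in a coupling parameter. Write $S_2=S_1+\lambda K$ for $K\in\calF_\loc^+$, so that $r_{S_1+\lambda K,S_1}$ is given by the formal power series (\ref{TS}). Expand both sides of (\ref{cantr}) in powers of $\lambda$; the zeroth-order identity is tautological, so the content lies at order $\lambda^n$ for $n\geq 1$, and by the composition/semigroup property it suffices to control the recursive step.

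At first order the identity to verify reads
\begin{equation}
\{R_{S_1}(K,F),G\}_{S_1}+(-1)^{|K||F|}\{F,R_{S_1}(K,G)\}_{S_1} - R_{S_1}(K,\{F,G\}_{S_1})=\langle F^{(1)},\dot{\Delta}_{S_1}*_a G^{(1)}\rangle^a\,,
\end{equation}
where the first-order variation of the causal propagator,
\begin{equation}
\dot{\Delta}_{S_1}=-\Delta_{S_1}^R*_a K^{(2)}*_a\Delta_{S_1}^R+\Delta_{S_1}^A*_a K^{(2)}*_a\Delta_{S_1}^A\,,
\end{equation}
is obtained by applying (\ref{nretb}) separately to $\Delta^R$ and $\Delta^A$. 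I would prove this by direct computation: insert the explicit formulas (\ref{ret2}), (\ref{adv2}) and (\ref{peierls2}); apply the graded Leibniz rule for $d_h$ to $R_{S_1}(K,F)$ and to $\{F,G\}_{S_1}$; use the defining equation (\ref{delt}) for $\Delta^{R/A}$ to recombine chains of the form $\Delta^{R/A}*_a K^{(2)}*_a\Delta^{R/A}$; and observe that the mixed chains $\Delta^R*_a K^{(2)}*_a\Delta^A$ cancel between the two sides against the Jacobi-like terms produced by the Leibniz rule.

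Once this first-order identity is established, I would extend it to arbitrary order $\lambda^n$ by induction, using the recursion (\ref{nretc}) and the explicit formula (\ref{nretb}) for $\Delta_{S_1+\lambda K}^{A(k)}$. Equivalently, differentiating the semigroup identity $r_{S_1+(\lambda+\mu)K,S_1}=r_{S_1+\lambda K,S_1}\circ r_{S_1+(\lambda+\mu)K,S_1+\lambda K}$ at $\mu=0$ reduces the order-$\lambda^{n+1}$ identity to a combination of the first-order identity at the shifted base action $S_1+\lambda K$ with the inductive hypothesis at order $\lambda^n$. The already established homomorphism property of $r_{S_2,S_1}$, together with the graded Leibniz rule for $\{.,.\}_{S_1}$, then reduces the general case to checking (\ref{cantr}) on a generating subfamily of $\calF_\loc$.

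The main obstacle is the systematic bookkeeping of graded signs. Unlike the bosonic case, every application of the Leibniz rule for $d_h$, every commutation of a left variational derivative past an odd factor, and every use of $m_a$ implicit in $*_a$ produces a sign $(-1)^{|\cdot||\cdot|}$; additionally the propagator satisfies $\Delta_{S_1}(x,y)=-\Delta_{S_1}(y,x)^T$ rather than the symmetric bosonic relation. I would handle the signs by first verifying the first-order identity on homogeneous generators of the form $\Phi_{x_1}\wedge\ldots\wedge\Phi_{x_p}$, where all contractions and sign factors are explicit from the wedge ordering, and then extending to general $F,G\in\calF_\loc$ by continuity and the graded Leibniz rules for $\{.,.\}_{S_1}$ and for $R_{S_1}$.
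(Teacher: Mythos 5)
Your proposal is correct and follows essentially the same route as the paper: reduce (\ref{cantr}) to its infinitesimal (first-order in $\lambda$) version and verify that identity by direct computation, using $\frac{d}{d\lambda}\big|_{\lambda=0}\Delta^{R/A}_{S_1+\lambda K}=-\Delta^{R/A}_{S_1}*_a K^{(2)}*_a\Delta^{R/A}_{S_1}$, which is exactly your $\dot{\Delta}_{S_1}$. One point to make explicit when you carry out the Leibniz-rule computation: for non-quadratic $S_1$ the propagators themselves are field-dependent, so $d_h$ also hits $\Delta^{R/A}_{S_1}$ and produces terms $\bigl\langle(\Delta_{S_1}^{A})^{(1)},h\bigr\rangle=-\Delta_{S_1}^{A}*_a\bigl\langle S_1^{(3)};h,.,.\bigr\rangle*_a\Delta_{S_1}^{A}$ whose cancellation among the three left-hand terms is part of the verification (the paper singles this identity out as one of the two facts needed).
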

\begin{proof}
The proof is analogous to the bosonic case \cite{DF}. The infinitesimal version of (\ref{cantr}) is simply:
\begin{equation}
\{R_S (H, F ), G\} + \{F , R_S (H, G)\} = 
R_S (H, \{F, G\}) + \frac{d}{ d\lambda}\Big|_ {\lambda=0} (R_{S +\lambda H} (F, G) - A_{S +\lambda H} (F, G))\,.
\end{equation}
This in turn can be verified by a straightforward calculation, using the fact that:
\begin{eqnarray}
 \left<(\Delta_S^{A})^{(1)},h\right>&=&
    -\Delta_S^{A}*_a
    \left<S^{(3)};h,.,.\right>*_a
    \Delta_S^{A}\,,\\
     \frac{d}{d\lambda}\Big|_{\lambda=0}
     \Delta_{S+\lambda H}^{A}&=&
     -\Delta_S^{A}*_a H^{(2)}*_a \Delta_S^{A}.
\end{eqnarray}
The second identity can be proved analogously as the formula (37) of \cite{DF02}.
\end{proof}
\section{Gross-Neveu model}\label{grossn}
After introducing the general formalism, we shall now apply it in a concrete example. We want to construct the algebra of classical observables for the Gross-Neveu model. Since we shall do it perturbatively, we start with the free action, namely we consider a free Dirac field in Minkowski spacetime. Let $D\M$ ($D^*\M$) be the spinor (cospinor) bundle. We take the Whitney sum $D\M\oplus D^*\M$ and define the configuration space to be $E=\mathscr{E}(D\M\oplus D^*\M)$, the set of smooth sections. Let $E\ni\tilde{u}=u\oplus \overline{u}$. To be consistent with the standard approach we introduce a following notation for evaluation functionals:
\begin{eqnarray}
\Psi_{x\ A}(\tilde{u})&=&u_{A}(x)\,,\\
\overline{\Psi}_x^{\dot B}(\tilde{u})&=&\overline{u}^{\dot B}(x)\,.
\end{eqnarray}
The generalized action functional for the free Dirac field takes the form:
 \begin{equation}
S_0(f)(u)=\int dx\, f(x)(\overline{\Psi}_x\wedge(i\partial\!\!\!/-m)\Psi_x)(u)\quad f\in\calD(\M)\,,\label{dirac}
 \end{equation}
 where $u\in\euC(\M,E)$ and the derivative $\partial_\mu$ is a weak derivative, i.e.: $(\partial_\mu \Psi_x)(u)\doteq \Psi_x(\partial_\mu u)=(\partial_\mu u)(x)$. The equations of motion take the form:
 \begin{equation}
 \left<S_0(1)^{(1)}(\tilde{u}),\tilde{h}\right>=\int dx\, (\overline{\Psi}_x\wedge(i\partial\!\!\!/-m)\Psi_x)(\tilde{h}\wedge\tilde{u})
= \int dx\, \left<\tilde{h}(x), D\Psi_x\oplus -D^*\overline{\Psi}_x\right>_E(\tilde{u})\,,
 \end{equation}
where $D\Psi_x\doteq(i\partial\!\!\!/-m)\Psi_x$, $D^*\overline{\Psi}_x\doteq-\overline{\Psi}_x(i\overleftarrow{\partial\!\!\!/}+m)$ and $<.,.>_E$ denotes the dual pairing on $E$ induced by the pairing between spinors and cospinors.
Let $\calJ_0$ be an ideal generated by the elements $D\Psi_x\oplus -D^*\overline{\Psi}_x$. The on-shell algebra of functionals is defined as $\calF/\calJ_0$.
The second derivative of $S_0(f)$ takes the form:
\begin{equation}
\left<S_0(1)^{(2)};\tilde{h}_1,\tilde{h}_2\right>=
\int dx\, \left(\overline{h_1}(i\partial\!\!\!/-m)h_2-\overline{h_2}(i\partial\!\!\!/-m)h_1\right)\,.
\end{equation}
It is convenient to write $S_0(1)^{(2)}(x,y)$ as a block matrix in the basis $(u_A,\overline{u}^{\dot A})$:
\begin{equation}
S_0(1)^{(2)}(x,y)=\delta(x-y)\left(\begin{array}{cc}
0&D^{*^T}(x)\\
-D(x)&0
\end{array}\right)\,,
\end{equation}
where ${}^T$ denotes the transpose of a matrix. We can construct the retarded and advanced Green's functions $\Delta_0^{R/A}$ using the fact that $DD^*=D^*D=\Box+m^2$. Let $G^{R/A}$ be retarded/advanced Green's function for $(\Box+m^2)$. It can be shown that:
\begin{equation}
\Delta_0^{R/A}(x,y)=\left(\begin{array}{cc}
0&-D^{*}(x)G^{R/A}(x,y)\\
D^T(x)G^{R/A}(x,y)&0
\end{array}\right)\doteq\left(\begin{array}{cc}
0&{K_0}_*^{R/A}\\
-K_0^{R/A}&0
\end{array}
\right)\,,\label{free}
\end{equation}
The Peierls bracket for the free theory $\{.,.\}_0$ is given by equation (\ref{peierls2}) with the causal propagator $\Delta_0=\Delta_0^R-\Delta_0^A$ determined by (\ref{free}).
The interacting action functional for the Gross-Neveu model in D dimensions with $N$ Dirac spinors (colours) takes the form (with the spinor indices suppressed):
\begin{multline}
 S(f)(u)=\\= \int dx\, f(x)\Big(\sum\limits_{a=1}^N\overline{\Psi}^a_x\wedge(i\partial\!\!\!/-m)\Psi^a_{x}+\frac{\lambda g(x)}{2N}\sum\limits_{a,b=1}^N(\overline{\Psi}^a_{x}\wedge\Psi^a_x)\wedge(\overline{\Psi}^b_{x}\wedge\Psi^b_x)\Big)(u)\,,\label{gn}
\end{multline}
where $\lambda$ is a coupling constant and $g\in\calD(\M)$ is the spacetime cutoff for the interaction. Let $\calJ$ be the ideal generated by elements of the form:
\begin{equation}
\Big(D\Psi^a_x+\frac{\lambda g(x)}{N}\sum\limits_{b=1}^N(\overline{\Psi}^b_{x}\wedge\Psi^b_x)\wedge\Psi^a_x\Big)\oplus -\Big(D^*\overline\Psi^b_{x}+\frac{\lambda g(x)}{N}\sum\limits_{b=1}^N(\overline\Psi^b_{x}\wedge\Psi^b_x)\wedge\overline\Psi^a_{x}\Big)\,.
\end{equation}
Then equations of motion are realized on the algebra $\calF/\calJ$. The second derivative reads:
\begin{multline}
\left<S(1)^{(2)}(1\oplus\tilde{u}_1\wedge \tilde{u}_2);h_1,h_2\right>=\\
=\int d^Dx\Big(\sum\limits_{a=1}^N\overline{h_1}^a(x)Dh_2^a(x)+\frac{\lambda g(x)}{N}\sum\limits_{a,b=1}^N\overline{h_1}^a(\overline{u_1}^bu_2^b-\overline{u_2}^bu_1^b)h_2^a-c.c.\Big)\,.\label{gn2}
\end{multline}
One can read off from the above equation the distribution kernel of $S(1)^{(2)}$ to be (written formally): 
\begin{equation}
S(1)^{(2)}_{ab}(x,y)=\delta(x-y)\delta_{ab}\left(\begin{array}{cc}
0&D^{*^T}(x)+F^T(x)\\
-D(x)-F(x)&0
\end{array}\right)\,,
\end{equation}
where $F(x)=\frac{\lambda g(x)}{2N}\sum\limits_{b=1}^N\overline{\Psi}^b_x\wedge\Psi^b_x$. It turns out that $S(1)^{(2)}$ can be inverted and one can also impose support conditions on the inverse. Let $\Delta_I^{R/A}$ denote the inverse whose support satisfies condition (\ref{retarded}) or (\ref{advanced}) respectively. For every $u=\oplus_{p=0}^nu^{(p)}\in\euC(M,V)$ we have:
\begin{equation}
({\Delta_I}^{R/A})_{ab}(x,y)(u)=\delta_{ab}\left(\begin{array}{cc}
0&{K_I}_*^{R/A}(x,y)(u)\\
-K_I^{R/A}(x,y)(u)&0
\end{array}
\right)\,,\label{inter}
\end{equation}
where the corresponding distribution kernels are defined as:
\begin{gather}
K_I^{R/A}(x,y)(u)\doteq \Big(K_0^{R/A}(x,y)+\\
+\sum_{k=1}^{n/2}(-1)^k\int dz_1...dz_kK_0^{R/A}(x,z_1)K_0^{R/A}(z_1,z_2)...\,K_0^{R/A}(z_{k},y)F(z_1)\wedge...\wedge F(z_k)\Big)(u)
\end{gather}
Analogously for ${K_I}_*^{R/A}$. 
We can conclude that the inverse of $S(1)^{(2)}$ exists as a distribution with values in $\calF^+$. The Peierls bracket for the interacting theory $\{.,.\}_I$ is given by equation (\ref{peierls2}) with the causal propagator $\Delta_I=\Delta_I^R-\Delta_I^A$.
\section{Deformation quantization}\label{quant}
Finally we come to the quantization. In this section we want to show how the formalism we introduced for the classical theory fits into the framework of deformation quantization introduced in \cite{BF0,Duetsch:2000nh,DF}. We start with the free theory and introduce the interaction in the perturbative way. Let $S$ be the free action, i.e. $S\in\calF^2$. We assume that $S(1)^{(2)}$ is a strictly hyperbolic operator on $E$. 
Let $\Delta$ be the corresponding causal propagator, i.e. $\Delta=\Delta^R-\Delta^A$. In the first step we consider only very regular elements of $\calF$. Let $\calF_\reg=\prod\limits_{n=0}^\infty\calD(\M^n,V^{\otimes n})^a$.
The quantum algebra is defined by deforming the $\wedge$-product on $\calF_\reg$. We define the star product on  $\calF_\reg$ analogously to \cite{BDF,DF,Kai} but instead of a symmetric tensor product we use $\wedge$. First we introduce 
a graded functional differential operator $\Gamma_{\Delta}:\bigwedge\limits^2\calF_\reg\rightarrow \bigwedge\limits^2\calF_\reg$:
\begin{equation}\label{Gamn}
\Gamma _{\Delta}(F,G)\doteq(-1)^{(|F|+1)}\frac{1}{2}\int\! dxdy\ \Delta(x,y)\cdot F^{(1)}(x)\wedge G^{(1)}(y)\,,
\end{equation}
where  $F$, $G$ are homogenous. Clearly $\Gamma_{\Delta}$ can be extended also to non-homogenous elements of $\calF_\reg$ by linearity. 
Let $\calF_\reg[[\hbar]]$ denote the space of formal power series with coefficients in $\calF_\reg$. It is equipped with the direct product topology induced by the topology of $\calF_\reg$. By a slight abuse of notation we denote these topologies by the same symbol. The $\star$-product is defined as:
\begin{eqnarray}
\star:\bigwedge^2\calF[[\hbar]]&\rightarrow& \calF[[\hbar]]\nonumber\\
F\star G&\doteq&\exp(i\hbar \Gamma_{\Delta})(F,G),\label{star}
\end{eqnarray}
where $\exp(i\hbar \Gamma_{\Delta})$ is a short-hand notation for a formal power series: $\sum\limits_{n=0}^\infty\frac{1}{n!}(\Gamma_{\Delta})^n$ and $\Gamma_{\Delta}^0(F,G)=F\wedge G$. With the star-product (\ref{star}) we can define the commutator as:
\begin{equation}
[F,G]_{\star}\doteq F\star G-(-1)^{|F||G|}G\star F\,.
\end{equation}
Particularly $\Phi(f)=\int dx f_a(x)\Phi^a_x$, a field smeared with a test function is an element of $\calF_\reg^1$. The $\star$-product of two such elements reads:
\begin{equation}
\Phi(f)\star\Phi(g)=\Phi(f)\wedge \Phi(g)+\frac{i\hbar}{2}\left<f,\Delta g\right>\,,
\end{equation}
where $\left<f,\Delta g\right>\doteq\sum\limits_{a,b}\!\int\!\! dxdy f_a(x)\Delta^{ab}(x,y)g_b(y)$. The corresponding (anti-) commutator takes the form:
\begin{equation}
[\Phi(f),\Phi(g)]_\star=i\hbar\left<f,\Delta g\right>\,.
\end{equation}
As an example we can take the free Dirac field. The causal propagator written in the matrix form reads:
\begin{equation}
\Delta=\left(\begin{array}{cc}
0&K\\
-K_*&0
\end{array}
\right)\,.
\end{equation}
From this, it follows that:
\begin{equation}
[\Psi(g),\overline{\Psi}(f)]_{\star}=i\hbar\left<g,Kf\right>_E=-i\hbar\left<K_*g,f\right>_E=[\overline{\Psi}(f),\Psi(g)]_{\star}\,,
\end{equation}
where $f\in \calD(DM),g\in \calD(D^*M)$.
This is the quantized algebra of free fields. Now we want to introduce the interaction. Following \cite{BDF} we use to this end the relative $S$-matrix. 

Firstly we introduce the time ordered product. Let $\Delta^D=\frac{1}{2}(\Delta^R+\Delta^A)$ be the Dirac propagator. One can define a map $\Gamma_{\Delta^D}:\calF_\reg[[\hbar]]\rightarrow\calF_\reg[[\hbar]]$ on homogenous elements by:
\begin{equation}
\Gamma_{\Delta^D}\doteq(-1)^{|F|}\frac{1}{2}\int\! dxdy\,\Delta^D(x,y)\,F^{(2)}(x,y)\,.
\end{equation}
The time-ordering operator is a map $\mathcal{T}:\calF_\reg[[\hbar]]\rightarrow\calF_\reg[[\hbar]]$ defined as a formal power series:
\begin{equation}
\calT(F)\doteq\exp(i\hbar\Gamma_{\Delta^D})F\,.
\end{equation}
The anti-time-ordering operator $\calT^{-1}$ is a formal inverse of $\calT$ defined by: $\calT^{-1}(F)\doteq\exp(-i\hbar\Gamma_{\Delta^D})F$. The time ordered product $\tio$ on $\calT(\calF_\reg)$ is given by:
\begin{equation}
F\tio G\doteq\calT(\calT^{-1}F\wedge\calT^{-1}G)\,.
\end{equation}
Following \cite{BF0,BDF} we introduce now the interaction by the formula of Bogoliubov.
Let $F\in\calF_\reg^+$ be an interaction term, then we define the formal S-matrix as:
\begin{equation}
\mathcal{S}(F)\doteq\sum\limits_{n=0}^\infty \frac{1}{n!}\calT^{\,n}(F,..,F)\doteq\sum\limits_{n=0}^\infty \frac{1}{n!}F\tio ...\tio F\,.
\end{equation}
With the use of $\mathcal{S}(F)$ one can define the interacting algebra along the lines of \cite{BF0,Duetsch:2000nh,BDF}.

Now we want to extend our discussion to more singular elements of $\calF$. In order to do this we need to replace the product $\star$ with an equivalent one, defined by means of a Hadamard solution satisfying the microlocal spectrum condition \cite{Duetsch:2000nh,BF0,BDF}. Since we are considering here only the Minkowski spacetime, we can choose for concreteness the Wightmann 2-point function $\Delta_+=\frac{i}{2}\Delta+\Delta_1$. We replace now $\frac{i}{2}\Delta$ in the definition of the star product with $\frac{i}{2}\Delta+\Delta_1$. This way we obtain an equivalent product $\starH$, which is related to the old one by the transformation $\alpha_{\Delta_1}$, defined as 
\begin{equation}
\alpha_{\Delta_1}(F)\doteq\exp(\hbar\Gamma_{\Delta_1})F\,,
\end{equation}
with
\begin{equation}\label{alphaH}
\Gamma _{\Delta_1}(F)\doteq(-1)^{|F|}\frac{1}{2}\int\! dxdy\ \Delta_1(x,y)F^{(2)}(x,y)\,.
\end{equation}
The relation between star products $\star$ and $\starH$ reads:
\begin{equation}
F\starH G=\alpha_{\Delta_1}(\alpha_{\Delta_1}^{\minus}F\star \alpha_{\Delta_1}^{\minus}G)\,.
\end{equation}
The new star product can now be extended to the elements of $\calF$. Consider now the map $\alpha_{\Delta_1}^{\minus}:\calF_\reg[[\hbar]]\rightarrow\calF_\reg[[\hbar]]$. We equip the domain $\calF_\reg[[\hbar]]$ with topology $\tau_\Xi$ and define a topology $\tau_{\Delta_1}$ on the target space as the finest one that makes $\al_{\Delta_1}^{\minus}$ continuous. Next we embed  $(\calF_\reg[[\hbar]],\tau_\Xi)$ as a dense topological vector space in $(\calF[[\hbar]],\tau_\Xi)$ and take a sequential closure of the target space $(\calF_\reg[[\hbar]],\tau_{\Delta_1})$ with respect to all the sequences $\alpha_{\Delta_1}^{\minus}(F_n)$, where $F_n$ converges to an element of  $\calF[[\hbar]]$ with respect to $\tau_\Xi$. We denote this closure by $\frakA[[\hbar]]$ and from the construction follows that $\alpha_{\Delta_1}^{\minus}:\calF[[\hbar]]\rightarrow\frakA[[\hbar]]$ is continuous. From the microlocal spectrum condition and the condition (\ref{WFcond}) on wavefront sets of functionals in $\calF$ it follows that the star product $\star$ is also continuous with respect to $\tau_{\Delta_1}$, so it can be extended to a product on $\frakA[[\hbar]]$. This way we obtain an involutive algebra $(\frakA[[\hbar]],\star)$.

The situation with the time-ordered product is more complicated. Like in the bosonic case this product is not continuous with respect to the topology $\tau_\Xi$. Nevertheless it is well defined for $F,G\in \frakA$ with disjoint supports. More generally we can define graded symmetric maps $\calT^{\,n}:\calF[[\hbar]]^{\otimes n}\rightarrow \frakA[[\hbar]]$ by means of:
\begin{equation}
\calT^n(F_1,...,F_n):=\alpha_{\Delta_1}^{\minus}(F_1)\tio...\tio \alpha_{\Delta_1}^{\minus}(F_n)\,,
\end{equation}
for $F_1,...,F_n$ with pairwise disjoint supports. The problem of renormalization is now formulated as the problem of extending maps $\calT^n$ to functionals with coinciding supports. From the mathematical point of view it doesn't differ much from the bosonic case, except of the fact that now we deal with graded symmetric in place of symmetric distributions. As shown in \cite{EG} the extension of $\calT^n$  can be defined if the arguments are local functionals.
\section*{Conclusions}
The proper understanding of classical theory of fermions is essential for applying the functional approach in quantum field theory to anticommutiong fields. In this paper we proposed a formalism that generalizes that of \cite{BDF,DF,DF02,DF04,Pedro}  and stays consistent with \cite{TCN,Thomas}, where the functional approach was applied to free Dirac fields. We also showed that our setting can be used to treat fermion-fermion interactions, in the example of the Gross-Neveu model. The algebraic structure we use can also describe more general nonlinear models, since we admit interaction terms that are defined by arbitrary (possibly infinite) power series. This generalizes the approach of \cite{TCN,Thomas}, where only finite sums are admitted.

Our results can be also applied in quantization of gauge theories with the use of BRST method, since ghost fields have to be of fermionic type. Furthermore our approach allows to treat odd and even variables on the equal footing and is therefore very natural to apply to BV formalism. We also provide a notion of an ``odd'' derivative which can be used to make the formal calculations of BRST and BV quantization more precise. The next step is to apply the results concerning odd fields to complete treatment of gauge field theories in the functional approach. This is done in \cite{FR}. 
\section*{Acknowledgements}
I would like to thank K. Fredenhagen for enlightening discussions and remarks. Furthermore I want to thank R. Brunetti, C. Dappiaggi, K. Keller,  P. Lauridsen Ribeiro and J. Zahn for valuable comments. I am also grateful to the Villigst Stiftung for the financial support of my Ph.D. 
\appendix
%
\section{Topologies}\label{tops}
We recall from \cite{Jar} some of the important definitions from topology and distribution theory. Let $E,F$ be locally convex topological vector spaces (lcvs) such that there exists a dual paring $<.,.>:E\times F\rightarrow\R$. $E$ can be regarded as a linear subspace of $\R^F$ and the topology it inherits from the product topology of $\R^F$, is called \textit{the weak topology}, denoted by $\sigma(E,F)$.
\begin{definition}[\cite{Jar}, 3.2]
A subset $U$ of a topological vector space $X$ is called \textit{complete} (\textit{sequentially complete}) if every Cauchy net (sequence) converges in $U$. We say that $X$ is quasi-complete if every closed bounded subset of $X$ is complete.
\end{definition} 
The property of (sequential) completeness and quasi-completeness is inherited by the infinite direct products and infinite direct sums \cite[3.3.5]{Jar}.
\begin{definition}
Let $E,F$ be Hausdorff lcvs and $\euB$ be the family of bounded sets of the completion of $E$ (bornology). Let $\tau_\euB$ be the topology of uniform convergence on bounded sets it induces on $L(E,F)$. We say that $E$ has the (sequential) approximation property if one of the following equivalent conditions holds:
\begin{enumerate}
\item $E'\otimes F$ is (sequentially) dense in $(L(E,F),\tau_\euB)$ for every $F$\,,
\item $E'\otimes E$ is (sequentially) dense in $(L(E,E),\tau_\euB)$\,,
\item $\1_E$ is the $\tau_\euB$-limit of some (sequence) net in $E'\otimes E$.
\end{enumerate}
\end{definition}
This property is also inherited by a direct product of a family of lcvs that are Hausdorff \cite[18.2.4]{Jar}.
Now we shall describe the topology of $\calA$ in more detail. The family of seminorms  for the space $\calE(\M)$ is given by:
\begin{equation}
p_{K,m}(\phi)=\sup_{x\in K\atop |\alpha|\leq m}|\partial^\alpha\phi(x)|\,,
\end{equation}
where $\alpha\in\Na^N$ is a multiindex. A set $B\subset\calE(\M)$ is bounded if $\sup\limits_{\phi\in B}\{p_{K,m}(\phi)\}<\infty$ for all seminorms $p_{K,m}$.
Let $\euB$ be the family of bounded sets in $\calE(\M)$. The \textit{strong topology} on the dual space $\calE'(\M)$ is defined by a family of seminorms: $p_B(T)\doteq\sup_{\phi\in B}\left<T,\phi\right>$, where $B\in \euB$, $T\in \calE'(\M)$, $\phi\in \calE(\M)$. The bounded sets in $\euC(\M,V)$ are finite products of bounded sets in the constituent spaces $\calE(\M^p,V^{\otimes p})$. It follows that the strong topology on $\calA$ is generated by the family of seminorms: $p_{B_{i_1},\ldots,B_{i_m}}(T)\doteq\sum\limits_{k=1}^m\sup\limits_{\phi\in B_{i_k}}\left<T,\phi\right>$, where $k\in\Na$ and $B_{i_k}\subset \calE(\M^{i_k},V^{\otimes i_k})^a$ is bounded.

The function spaces $\mathscr{S}(\M)$, $\calD(\M)$, $\calE(\M)$, as well as their strong duals $\mathscr{S}'(\M)$, $\calD'(\M)$, $\calE'(\M)$, are reflexive complete (therefore quasi-complete) nuclear spaces (\cite{Pie}) and they have the (sequential) approximation property  (\cite{Jar,Sch1}). It is shown in \cite{Nle} that every locally convex vector space is nuclear with its weak topology, so the weak duals $\mathscr{S}'(\M)$, $\calD'(\M)$, $\calE'(\M)$ are trivially nuclear.
Note that spaces $\calE(\M^n,V^{\otimes n})^a$ defined in section \ref{asf} are Frech\'et nuclear spaces and this holds also for their countable direct sum $\euC(\M,V)$. The corresponding dual spaces $\calE'(\M^n,V^{\otimes n})^a$ are nuclear (with strong and weak topologies) and we can equip $\calA=\euC(\R,V)'=\prod\limits_{n=0}^\infty\calE'(\M^n,V^{\otimes n})^a$ with the direct product topology $\tau_b$ (or $\tau_\sigma$) induced by the strong (weak) topologies on each of the factors. Locally convex topological vector space $(\calA,\tau_b)$ (or $(\calA,\tau_\sigma)$) 
 is also nuclear because the nuclearity is preserved under the countable direct product. For our purposes it is sufficient to use the weak topology $(\calA,\tau_\sigma)$. The quasi-completeness is also preserved under taking the direct products, so we conclude that $(\calA,\tau_\sigma)$ is quasi-complete.

\end{document}